\newtheorem{lemma}{Lemma}
\newcommand{\sys}{{DCFIT}\xspace}
\begin{document}
\title{\sys: Initial Trigger-Based PFC Deadlock Detection in the Data Plane}

\author{\large Xinyu Crystal Wu, T.S. Eugene Ng}
\affil{\large Rice University}


\maketitle

\begin{abstract}
Recent data center applications rely on lossless networks to achieve high network performance. Lossless networks, however, can suffer from in-network deadlocks induced by hop-by-hop flow control protocols like PFC. Once deadlocks occur, large parts of the network could be blocked. Existing solutions mainly center on a deadlock avoidance strategy; unfortunately, they are not foolproof. Thus, deadlock detection is a {\em necessary} last resort. 
In this paper, we propose \sys, a new mechanism performed entirely in the data plane to detect and solve deadlocks for arbitrary network topologies and routing protocols. Unique to \sys is the use of deadlock initial triggers, which contribute to efficient deadlock detection and deadlock recurrence prevention. 
Preliminary results indicate that \sys can detect deadlocks quickly with minimal overhead and mitigate the recurrence of the same deadlocks effectively. This work does not raise any ethical issues.


\end{abstract}

\section{introduction}
\label{section:intro}

Driven by demand for ultra-low latency, high throughput network applications with low CPU overhead, lossless networks are widely deployed in modern data centers~\cite{zhu2015congestion, zhu2015packet}. One typical implementation of such networks supported by Remote Direct Memory Access (RDMA) built upon Ethernet relies on Priority-based flow control (PFC) to prevent buffer overflow~\cite{ieee}. With the PFC mechanism, packet loss can be avoided by pausing the immediate upstream switch.

Although effective, PFC can induce an intrinsic problem: deadlocks caused by cyclic buffer dependency (CBD), where no packets in the cycle can be propagated. 
Once deadlocks occur, PFC pause frames could spread to significant parts of the fabric, causing a large percentage of flows to stop transmission. In the worst case, all ports along all paths would be ceased and the whole network would be blocked.

Many large cloud providers have confirmed that deadlocks occur in practice frequently~\cite{stephens2014practical, qian2019gentle, halperin2011augmenting}. Deadlocks could happen when routing rules form a loop, but it is not a unique product of routing loops---recent work has shown that even for tree-based topology with up-down loop-free routing, deadlocks could still occur due to situations such as link failures~\cite{wu2012netpilot, lou2020understanding, tan2019netbouncer}, transient loops~\cite{jin2014dynamic}, complex network updates~\cite{jin2014dynamic, forster2016consistent}, port flaps~\cite{liu2013f10}, and misconfigurations~\cite{zhu2015packet, kakarla2020finding, hu2017tagger}. 
Furthermore, deadlocks do not recover automatically even after the problems (e.g. transient loop) that caused the deadlock formation have been fixed ~\cite{hu2016deadlocks}.
Although recent works have proposed to replace PFC with a smarter retransmission mechanism over lossy networks~\cite{mittal2018revisiting, mittal2018towards, mittalgood}, many services like inter-process communication still require a lossless guarantee to obtain optimal performance.  Furthermore, short-lived flows, commonly seen in data center networks~\cite{benson2010network, zats2015fastlane}, are more sensitive to packet loss---any packet retransmission would significantly increase their flow completion time~\cite{alizadeh2010data, zats2012detail}. Therefore, a lossless network remains very desirable and resolving deadlock induced by PFC is still an important issue.

Existing approaches to resolving deadlock fall into two main categories: avoidance and detection/recovery. Most recent research efforts focus on deadlock avoidance, but none of them is foolproof. For example, some approaches require special buffer management techniques to support multiple priority classes. However, commodity switches in modern data centers can only support two or three lossless priorities in practice~\cite{guo2016rdma, hu2017tagger}. Some schemes rely on limiting the rate of data transmission and thus avoid reaching the threshold of generating pause frames~\cite{qian2019gentle}. However, precise and fine-grained control of the sending rate may not always be guaranteed by switch implementations. No avoidance method can absolutely prevent deadlocks. 
As a result, an efficient and accurate deadlock detection method is a necessary safeguard. 

Detecting deadlocks in distributed networks is difficult as it usually relies on indirect inspection to detect suspected ports. Existing solutions for deadlock detection rely on centralized controllers or switch software control planes to detect and break the deadlock. However, both are inherently too slow to deal with deadlocks. Besides, existing solutions ignore finding the root cause of a deadlock, so even if the current deadlock is addressed, there is no guarantee that the same deadlock would not immediately reoccur again.

In this paper, we propose \sys---a deadlock detection mechanism entirely performed in the data plane. \sys works for arbitrary network topologies and routing protocols, and can react quickly after the deadlock formation and provide the root cause information for resolving the deadlock. Rather than continually monitor the throughput and queue occupancy of each switch port which incurs unnecessary overhead, \sys only triggers the detecting process when pause events happen. 
The metadata for deadlock detection is piggybacked onto pause frames or synthesized packets according to different situations. \sys provides a basis to analyze the initial trigger of the current deadlock, which helps to address deadlock recurrence.
Our preliminary results show that \sys can detect deadlocks accurately with low latency and minimal switch memory consumption. Furthermore, resolving the initial trigger can effectively prevent the same deadlock from recurring.

\section{Motivation}
\label{section:motivation}
\vspace{-2mm}
\subsection{Deadlock Avoidance - Not Foolproof}

\textbf{Restricted routing.}
The most common solutions for deadlock avoidance are to restrict routing paths and avoid the formulation of CBD~\cite{duato2001general, skeie2002layered, wu2003fault, sancho2004effective, blazewicz1994optimal, domke2011deadlock}.
However, routing restrictions not only waste link bandwidth and reduce throughput~\cite{hu2016deadlocks}, but also are incompatible with some topologies~\cite{dally1988deadlock, park1996generic} and routing protocols such as OSPF and BGP~\cite{stephens2016deadlock, stephens2014practical}. Furthermore, when some links are down, rerouting could still create CBD and lead to deadlocks~\cite{wu2012netpilot, liu2013f10}.

\textbf{Buffer management.}
Another method is to assign packets different priorities hop-by-hop and put packets into different buffers accordingly~\cite{lee2005prevention, underwood2011unified}.
The required number of priorities is determined by the longest path in the network, which increases with the network scale. However, today's commodity switches can only support two to three lossless priorities in practice~\cite{guo2016rdma, hu2017tagger}, which is insufficient.

\textbf{PFC pause frame restrictions.}
Recent proposed congestion control protocols~\cite{zhu2015congestion,mittal2015timely, li2019hpcc, cheng2020re, geng2019p4qcn} can reduce the possibility of pause. Also, operators can limit pause frame propagation by assigning different PFC thresholds to ports and switches based on their positions in the topology~\cite{hu2016deadlocks}. Although these methods can reduce the possibility of a deadlock, they cannot absolutely prevent deadlocks.

\textbf{Pre-configured transmission.}
Deadlock avoidance can also rely on configuring the network in advance and adjusting the configuration dynamically on the fly. For example, Tagger~\cite{hu2017tagger} pre-defines expected lossless paths and configures pre-generated match-action rules to avoid deadlock. However, it involves human intervention and complex network configuration that is pointed out to be error-prone~\cite{birkner2020config2spec, fogel2015general, beckett2017general}.

\textbf{Rate limiting.}
Rate limiting is used to break the necessary condition---\textit{hold and wait}---for the deadlock~\cite{qian2019gentle}. Nonetheless, in order to limit the port rate to specific values and to periodically adjust the port rate according to the queue length, highly precise control are required. However, precise and fine-grained control of the sending rate may not be always guaranteed by switch implementations.

\textbf{Summary.}
Existing deadlock avoidance approaches address the problem to some extent, but they are not foolproof. Therefore, deadlock detection is an important and necessary fail-safe.

\subsection{Existing Detection Solutions Fall Short}
Existing deadlock detection solutions rely on a centralized controller or switch local control planes~\cite{dally1988deadlock, anjan1995efficient, martinez1997software, lopez1998very, shpiner2016unlocking} to query port states and detect deadlocks. After detecting deadlocks, the control plane computes and installs rerouting or draining strategies to switches. The programmability of central controllers or local control planes enables flexible detection and recovery policies. However, inherent delays between data planes and control planes together with the software delays of control plane applications make these solutions unable to response to deadlocks fast enough.

In addition, existing solutions detect deadlocks by proactively monitoring for blocked ports. Concretely, if the throughput of a port is zero while the corresponding queue length is non-zero, the port is regarded as a suspected port that can form deadlocks. However, the overhead of proactive monitoring is very high as it requires the periodic inspection of all ports of all switches in a network. In a normal network, most of time, the inspection will find nothing wrong.

Furthermore, current deadlock detection solutions are unable to eliminate the root cause of the detected deadlock. Therefore, even if the deadlock can be broken and the traffic flow can recover temporarily, none of them is able to prevent the same deadlock from reappearing again.

\subsection{New Opportunity for Deadlock Detection}
For each shortcoming of existing solutions, we propose a new alternative design strategy in \sys.

\textbf{Detecting deadlocks in the data plane.}
Recent proposed programmable data plane provides a new opportunity for deadlock detection~\cite{bansal2012openradio, arashloo2016snap, hancock2016hyper4}. First, programmable parsers and deparsers enable us to customize packet headers and protocols. Metadata for deadlock detection can thus be piggybacked onto pause frames.
Second, the provided stateful memory and ALUs make it possible to maintain state information directly in the data plane. A deadlock detection algorithm performed entirely in the data plane reduces the high overhead introduced by interacting with the switch control plane. Finally, once compiled, data plane programs are guaranteed to run at line speed, which allows us to react quickly when the deadlock is formed.

\textbf{Detecting deadlocks reactively.} 
Rather than periodically checking the status of switch ports, \sys starts the deadlock detection process only when an initial pause event happens. The detecting process follows the direction of pause frame propagation, which greatly reduces the overhead. Besides, being triggered by the initial pause event, \sys is able to detect deadlocks almost immediately once they have been formed. 

\textbf{Preventing recurrence of the same deadlock.}
Breaking the deadlock by adjusting a switch on the CBD might be insufficient, as a chain of similar pause events could be generated along switches on the path again and cause the same deadlock. Experimental observations from existing work~\cite{qian2019gentle} show that frequent pause on upstream ports is the root cause of \textit{hold and wait}, which then forms deadlocks. Therefore, figuring out the origin of frequent pause and mitigating the following pause events help to resolve deadlocks. The Ethernet node that initially sends the pause frame and leads to the deadlock can be regarded as the initial trigger. Identifying and dealing with the initial trigger can prevent the repeated formulations of the same deadlock. 


\if 0
\subsection{Challenges for Deadlock Detection}
\textbf{Compatible with arbitrary topologies.}
Topologies used for the lossless network are very arbitrary - PFC deadlock can be formed on a "structured" topology like Clos~\cite{scott2006blackwidow} and FatTree~\cite{al2008scalable}, or a non-hierarchical topology like Jellyfish~\cite{singla2012jellyfish}. Detection methods based on information provided by the tree-based topology (e.g., the limited length of routing paths, downstream and upstream neighbors for each switch) is not applicable to randomized topologies. A general design 
for deadlock detection is required for all network topologies.

\textbf{Limited resources in the switch data plane.}
The practical implementation of deadlock detection is restricted by the available memory resources in the switch data plane. As pause frames are not always generated immediately after receiving the previous one, each port needs to store multiple causality information. Simply storing and extracting all causality-chain information is not feasible since the length of pause frame causality-chain can be very large on a random topology or fat-tree topology with failures or configurations. Achieving deadlock detection without exceeding the hardware capacity of commodity switch pipeline becomes the main focus. 
\fi

\section{System design}
\label{section:system}

\sys leverages the programmable data plane to detect deadlocks and identify initial trigger nodes. Detection processes are based on different locations of the initial trigger - on the loop or out of the loop, respectively. A port-based data structure is used to keep track of causal relationships between pause events generated at different switches. \sys attaches metadata for deadlock detection to pause frames or synthesized packets. Once a deadlock is detected, the initial trigger provides clues to mitigate potential pause events later and hence prevent the same deadlock from recurring.

\sys focuses on two main principles---1) spatial: a chain of pause events triggered hop by hop (causality-chain) forms a loop (causality-loop) 2) temporal: all nodes on the causality-loop are paused simultaneously, indicating a real deadlock rather than just a CBD scenario. \sys makes no assumptions about the topologies and routing algorithms in use. 

\subsection{Identifying the Initial Trigger}
The initial trigger, which can be a server or a switch, is at the beginning position of the causality chain. A server generating a pause frame is immediately identified as an initial trigger since it is the destination of flows in the network. A switch is identified as an initial trigger if it has not received any pause frame from the corresponding downstream when generating its own pause frame. 

When a pause frame is triggered at a switch port, it might be caused by previously received pause frames at other ports. It is because some of the traffic from one switch port is destined to other ports that have already been paused. Therefore, when an ingress port generates a pause frame due to the congestion at a corresponding egress port, it checks whether the egress port is paused or not. An initial trigger is identified when the egress port is not paused.

\subsection{General Primitives for Deadlock Detection}
\label{primitives}
Before discussing deadlock detection based on the location of initial triggers, we present primitives for solving two basic aspects of deadlock detection. Such primitives can then be applied to cover different scenarios, including the initial trigger on the deadlock loop or out of the deadlock loop. A port-based causality data structure is maintained in the data plane to determine how the metadata for deadlock detection is forwarded. The symbols used in the following are displayed in Table~\ref{table:symbols}.

\begin{figure}[t]
    \centering
    \includegraphics[width=0.45\textwidth]{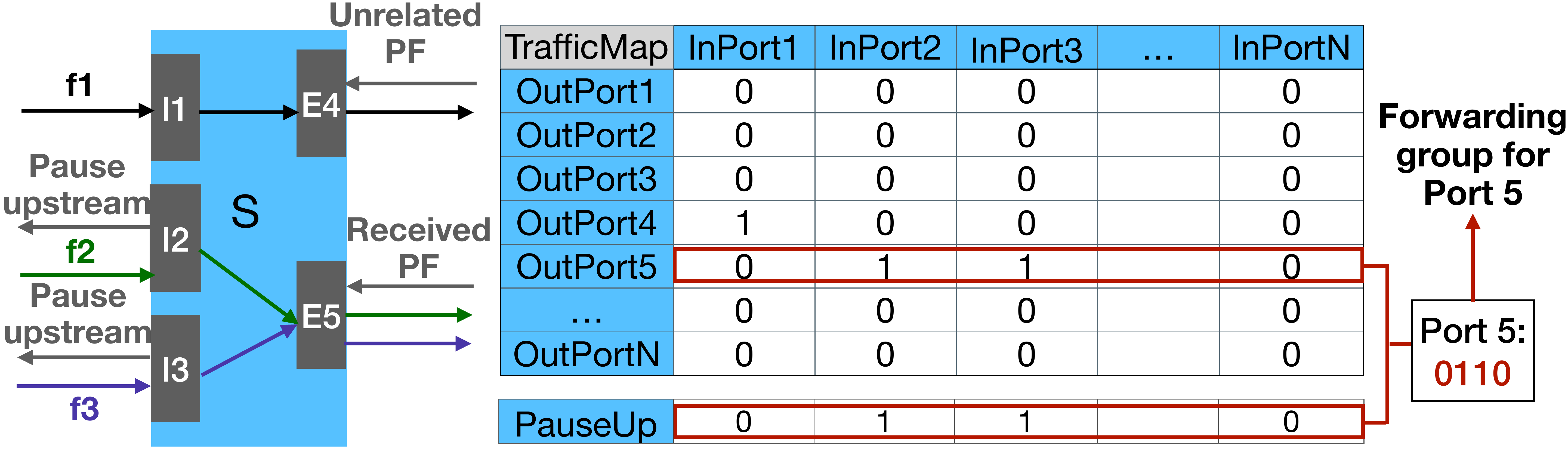}
    \vspace{-0.1in}
    \caption{Port-based causality data structure}
    \label{fig:causality}
    \vspace{-0.25in}
\end{figure}

\textbf{Port-based causality data structure:}
The port-based causality data structure maintains the causality relationship between different switch ports, as shown in Figure~\ref{fig:causality}. 
For a switch with N ports, each port maintains a bit-map of size N to track all the relevant ports that send traffic to its egress queue, which we call a traffic mapping. Each bit indicates whether there are active packets transmitting from a certain ingress port to a certain egress port. In the above example, the egress queue of port $E4$ is occupied by packets of flow $f1$ from ingress port $I1$, making the corresponding position in the traffic mapping be set to $1$. It will be cleared to $0$ when no active packet is in the egress queue. Similarly, for port $E5$, the bits for $I2$ and $I3$ are set to $1$.

Each port also maintains a bit that represents whether the corresponding port currently pause the upstream switch. In the example, port $I2$ and $I3$ have already sent pause frames, the corresponding values are set to $1$. When receiving a new pause frame, the switch would query the traffic mapping as well as the port state to 
determine the subsequent forwarding groups for the metadata used for deadlock detection. Notice that the direction of pause frames is from downstream to upstream, opposite to normal traffic flow. If $E5$ receives a pause frame, the switch will traverse the bit-map for $E5$ and the port states of all ports. Since $I2$ and $I3$ both have causality with $E5$ and currently pause the upstream, the metadata for deadlock detection is forwarded to $I2$ and $I3$.

\begin{table}[H]
    \scriptsize
    \begin{tabular}{ll}
    \hline
    Symbol & Meaning \\
    \hline \hline 
    $S_{tri} $    & Node ID of the initial trigger      \\
    $P_{tri}$    & Port ID of the initial trigger that sends pause frame\\
    $S_{gen-ini}$    & Node ID of the generic initiator      \\
    $P_{gen-ini}$    & Port ID of the generic initiator that sends pause frame\\
    $Seq_{id}$  & Sequence number of checking message sent by $S_{gen-ini}$\\
    $S_{cur}$   & Switch ID of the current switch        \\
    $\xi_{p}$ & Set of causal ports sending traffic to port $p$ at current switch\\
    $\delta_{p}$ & Set of ports that pause the upstream and be causal with port $p$\\
    $r_p$      & RESUME tag for port $p$ of the current switch\\
    \hline
    \end{tabular}
    \vspace{-0.05in}
    \caption {Meaning of symbols used in this paper. }
    \vspace{-0.15in}
    \label{table:symbols}
\end{table}

\textbf{Causality-loop primitive:}  
The causality-loop can be determined when the causality chain of pause frames has visited the same port of the same switch twice. 
To detect the causality-loop with minimal memory overhead, \sys uses the switch suspecting a causality-loop could be formed, which we called a generic initiator. Messages used for tracing the causality-chains are called checking messages. The packet header of a checking message is extended to record the unique ID \{$S_{gen-ini}$, $P_{gen-ini}$\} of the generic initiator, as well as the $Seq_{id}$ sent by the generic initiator. The $Seq_{id}$ represents a unique episode of causality-loop detection. It is used to decompose different causality-chains from the same generic initiator when resume and pause frames alternate.

Given a generic initiator for the causality-loop detection, the following steps are followed:   

\begin{itemize}
    \itemsep0em
    \vspace{-1mm}
    \item When the generic initiator sends out a checking message, it attaches its own \{$S_{gen-ini}$, $P_{gen-ini}$, $Seq_{id}$\} to the packet header. The selection of the generic initiator and when checking messages are sent are different for different use cases based on the location of the initial trigger (details in sections~\ref{Initial Trigger on Loop} and~\ref{Initial Trigger out of Loop}).
    \item When receiving a checking message, non-generic initiator switches parse and store the received \{$S_{gen-ini}$, $P_{gen-ini}$, $Seq_{id}$\} in the data plane at the receive port, which are then used for generating the next checking message. The $Seq_{id}$ is updated when receiving a new checking message from the generic initiator.
    The stored generic initiator info at a port is removed after receiving a resume frame.
    \item When port $p$ receives a checking message, non-generic initiator switches query the port-based causality data structure to obtain corresponding ports $\delta_{p}$ that currently pause the upstream and have causality with port $p$.
    If $\delta_{p}$\,= $\emptyset$, the switch will drop the current checking message and wait for the generation of the next checking message (see next bullet). Otherwise, the switch will forward the checking message to all ports in $\delta_{p}$.
    \item A non-generic initiator switch generates a new checking message when a pause frame is triggered by congestion on an egress port $p$.
    This new checking message will carry the corresponding \{$S_{gen-ini}$, $P_{gen-ini}$, $Seq_{id}$\} stored at port $p$.
    \item When a switch $S_{cur}$ receives a checking message from port $p$ whose $S_{gen-ini}$ is $S_{cur}$, $Seq_{id}$ is the latest, and $P_{gen-ini}$ belongs to $\xi_{p}$, the causality-chain has passed the same port of the same switch again. A causality-loop is determined and a deadlock is potentially formed.
\end{itemize}

\begin{lemma} 
If deadlock exists and $S_{gen-ini}$ is on the CBD, the causality-loop must be detected by $S_{gen-ini}$, when the received $S_{gen-ini}=S_{cur}$,
$P_{gen-ini} \in \, \xi_{p}$ 
where $p$ is the port receiving the checking message, and $Seq_{id}$ indicates the same episode.
\label{lemma:on loop}
\end{lemma}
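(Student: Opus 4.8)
\emph{Proof plan.} The plan is to fix the CBD cycle that the deadlock guarantees, name the particular port that the checking message passes through at every hop, and then show by induction along the cycle that at least one copy of the checking message emitted by $S_{gen-ini}$ returns to $S_{gen-ini}$ satisfying all three clauses of the detection predicate. I would set up notation first: existence of a deadlock gives a simple directed cycle of ports $(S_0,g_0)\to(S_1,g_1)\to\cdots\to(S_{m-1},g_{m-1})\to(S_0,g_0)$ that is \emph{stably} paused, where $g_i$ is the port through which $S_i$ pauses its cycle-predecessor $S_{i-1}$; this pause is held because the cycle-successor egress port $e_i$ of $S_i$ (whose link carries traffic toward $S_{i+1}$) stays congested; inside $S_i$ traffic flows from $g_i$ into the queue of $e_i$, so $g_i\in\xi_{e_i}$, and since $g_i$ is pausing upstream also $g_i\in\delta_{e_i}$; and the frame $S_{i+1}$ uses to pause $S_i$ arrives at $S_i$'s port $e_i$. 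Because the situation is a deadlock it never clears, so no $g_i$ or $e_i$ ever receives a resume frame and the generic-initiator metadata / episode number stored at these ports, once written, persists. Using the hypothesis that $S_{gen-ini}$ is on the CBD, set $S_0=S_{gen-ini}$ and $P_{gen-ini}=g_0$.

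Next I would follow the checking message. When $S_0$ emits its checking message it goes out (at least) on $g_0=P_{gen-ini}$ carrying $\{S_0,g_0,Seq_{id}\}$, and because the pause on $g_0$ is permanent this frame reaches $S_{m-1}$ at its port $e_{m-1}$. I then prove by induction on $j=1,\ldots,m$ (indices mod $m$) that some copy of the message, still carrying $\{S_0,g_0,Seq_{id}\}$, is received by $S_{m-j}$ at port $e_{m-j}$. The inductive step is exactly the causality-loop primitive: on receipt at $e_{m-j}$, switch $S_{m-j}$ stores $\{S_0,g_0,Seq_{id}\}$ at $e_{m-j}$ and forwards the message to every port in $\delta_{e_{m-j}}$; since $g_{m-j}\in\delta_{e_{m-j}}\neq\emptyset$ (it is causal with $e_{m-j}$ and is currently pausing upstream, both by the CBD) the message is not dropped, one copy leaves on $g_{m-j}$, and---that pause again being permanent---it arrives at $S_{m-j-1}$ at $e_{m-j-1}$. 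After $m$ steps the message is back at $S_0$, received at $e_0$.

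It then remains to check the detection predicate at this last hop: $S_{cur}=S_0=S_{gen-ini}$; the receive port is $p=e_0$ and $P_{gen-ini}=g_0\in\xi_{e_0}=\xi_p$ by the within-switch traffic mapping of the CBD at $S_0$; and the $Seq_{id}$ the message carries is the one $S_0$ issued, still the latest episode because the deadlock persists and no resume has touched the cycle, so $S_0$ never opened a newer episode. Hence $S_0$ receives a checking message with $S_{gen-ini}=S_{cur}$, $P_{gen-ini}\in\xi_p$, and $Seq_{id}$ from the same episode, i.e.\ by the primitive it declares the causality-loop---which is the assertion of Lemma~\ref{lemma:on loop}.

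The real obstacle, I expect, is the liveness/consistency bookkeeping rather than the ``geometry'' of the cycle. I must rule out that the traversing copy is lost or stale before it closes the loop: (i) a checking message piggybacked on a PFC pause frame (or synthesized packet) along a cycle all of whose ports are paused is still deliverable---it is not itself a victim of the data-plane congestion it diagnoses; (ii) it is never dropped for ``$\delta_p=\emptyset$'' anywhere on the cycle, which follows from the CBD but must be argued per hop, including the degenerate small-$m$ cases; and (iii) the $Seq_{id}$ recorded at the cycle ports is not bumped by unrelated pause/resume churn on those same ports, for which I appeal to a port on a deadlocked CBD being continuously paused. Branching is comparatively easy: the rule forwards to \emph{all} of $\delta_p$, so it suffices to track the single branch that stays on the cycle, and the ``first return'' to $S_0$ at $e_0$ is the detection the lemma refers to.
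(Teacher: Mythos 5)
Your proposal is correct and follows essentially the same route as the paper's proof: both argue that the deadlock implies a causality-loop, that the checking message propagates along that loop back to $S_{gen-ini}$, and that at the return hop $S_{cur}=S_{gen-ini}$, $P_{gen-ini}\in\xi_p$, and the matching $Seq_{id}$ together trigger detection. Your version is simply a more explicit, induction-based rendering of the paper's informal traversal argument, with the added (and reasonable) attention to message deliverability and $Seq_{id}$ stability that the paper leaves implicit.
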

\begin{proof}
For any given deadlock, a causality-loop must be implied. The checking message spreads with the causality-chain must also follow the causality-loop and return back to $S_{gen-ini}$. At this moment, $S_{cur}$ = \,$S_{gen-ini}$. The next step is to make sure that the checking message received at port $p$ has causalities with ports in $\xi_{p}$ connected to the upstream switches. If any port in $\xi_{p}$ is equal to the received $P_{gen-ini}$, $S_{cur}$ must have already sent out the checking message with the same $P_{gen-ini}$. It indicates there must be a repetitive ID and the causality-chain comes back to the beginning position of the causality-loop. If the recently sent $P_{gen-ini}$ has the same $Seq_{id}$ as the received $P_{gen-ini}$, the causality-chain must have traversed one port of one switch twice in the same episode, which 
verifies the causality-loop. 
\end{proof}
\vspace{-0.05in}
\textbf{Temporal consistency primitive:} 
The temporal consistency primitive is triggered after the causality-loop of pause behaviors is detected by the above mechanism. It is achieved by another round of further check also from the direction of pause frames. The paused ports on the causality-chain might be resumed during the process of causality-loop detection. Therefore, even if a causality-loop is detected, deadlock may not actually exist. Without this primitive, a deadlock could be declared mistakenly.

We leverage a strategy that checks if every node of the causality-loop is still paused ever since the initial pause event has been triggered. Each episode is determined by the $Seq_{id}$ of the generic initiator. Once a pause event occurs, it will hold until a resume frame is received. Each port on the switch maintains a RESUME tag $r_p$ representing whether the pause is resumed during the current detection process. It is set to zero in each episode when the corresponding port is paused, and updated to $1$ when receiving a resume frame. 
A synthesized temporary consistency check packet carrying \{$S_{gen-ini}$, $P_{gen-ini}$, $Seq_{id}$\} is sent passing through the causality-loop, which is obtained by querying the traffic mapping data structure and comparing the corresponding generic initiator ID. The synthesized packet is forwarded to ports that belong to the causal ports $\xi_{p}$ and have sent PFC packets with the same \{$S_{gen-ini}$, $P_{gen-ini}$\}. Each switch port that receives the temporal consistency check packet would check the corresponding RESUME tag and the stored $Seq_{id}$. 
A deadlock is determined if all switches on the causality-loop maintain false RESUME tags and the same $Seq_{id}$ as that during causality-loop detection.

\begin{lemma}
A deadlock is determined if and only if for each switch and each port $p$ along the causality-loop, $r_p$ = \,$0$ and $Seq_{id}$ is the same as the one used to determine the causality-loop.
\label{lemma:on loop3}
\end{lemma}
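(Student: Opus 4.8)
The plan is to prove the two directions separately, both against the backdrop of Lemma~\ref{lemma:on loop}: by the time the temporal consistency primitive runs, a causality-loop $L$ has already been identified with a fixed generic initiator $\{S_{gen-ini}, P_{gen-ini}\}$ and a fixed episode number, say $Seq_{id}^{\ast}$, and the ports traversed by the causality-chain form a cyclic buffer dependency. The temporal check packet carries $\{S_{gen-ini}, P_{gen-ini}, Seq_{id}^{\ast}\}$ and, at every hop $p$, is forwarded only to the causal ports in $\xi_{p}$ that have emitted a PFC frame carrying the same generic-initiator tag; I will argue that this traversal succeeds and returns to $S_{gen-ini}$ exactly when the stated condition $C$ --- ``$r_p = 0$ and the stored $Seq_{id}$ equals $Seq_{id}^{\ast}$ at every switch port along $L$'' --- holds.

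For the ``if'' direction, assume $C$ holds. For each port $p$ on $L$, $r_p$ was set to $0$ at the pause event belonging to episode $Seq_{id}^{\ast}$ and has remained $0$, which, by the update rule for $r_p$, means $p$ has received no resume frame since that pause event; hence $p$ has been continuously paused throughout the episode. Because every port on $L$ stores the \emph{same} $Seq_{id}^{\ast}$, all of these pause intervals overlap the single episode delimited by $Seq_{id}^{\ast}$: no intervening resume-then-pause at any port advanced the sequence or severed the chain. Consequently every port on $L$ is paused \emph{simultaneously}, and combined with hold-and-wait (each such egress queue is full and cannot drain) no packet along $L$ can advance and the configuration is self-sustaining --- a genuine deadlock. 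Moreover, since no port on $L$ has been resumed, none has cleared its stored generic-initiator info, so the check packet finds a non-empty matching causal port at every hop and returns to $S_{gen-ini}$, which therefore declares the deadlock.

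For the ``only if'' direction, suppose a deadlock exists, so that \sys must declare one. By the CBD-plus-hold-and-wait characterization, every port on the cyclic dependency --- in particular every port of $L$ --- is paused and the blockage is permanent, so no resume frame is ever delivered to any port of $L$; hence $r_p = 0$ at every such port. Because the stored $\{S_{gen-ini}, P_{gen-ini}, Seq_{id}\}$ at a port is removed only upon receiving a resume frame, and no resume frame arrives, each port of $L$ still holds exactly the tuple it held when the causality-loop was detected, i.e.\ $Seq_{id} = Seq_{id}^{\ast}$. This is precisely condition $C$, and (as in the previous paragraph) it is also what allows the check packet to complete its traversal, so the declaration is not spurious.

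The main obstacle is the temporal bookkeeping in the presence of alternating pause and resume frames: I must show that a \emph{uniform} $Seq_{id}^{\ast}$ along $L$ genuinely certifies that the paused intervals coincide in time, ruling out a ``phantom'' loop stitched together from pause events that held at disjoint moments. The argument hinges on two invariants that need to be stated carefully --- (i) $r_p$ stays $0$ throughout an episode if and only if $p$ saw no resume frame during it, and (ii) the generic-initiator tuple (hence $Seq_{id}$) at a port is erased exactly at a resume frame and reinstalled only by a fresh checking message --- together with the secondary assumption that the synthesized temporal-check packets are not themselves halted by the PFC pause on the links of $L$ (otherwise the ``if'' direction loses its liveness). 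I expect these invariants, rather than any deep combinatorial fact, to be where the real work lies.
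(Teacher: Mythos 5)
Your proposal is correct and rests on the same two observations as the paper's proof: that $r_p=0$ throughout an episode certifies the port received no resume frame and so stayed paused, and that a uniform $Seq_{id}$ along the loop rules out stitching together pause intervals from different episodes (your ``phantom loop'' concern is exactly the paper's third case). The only difference is presentational --- you argue the two directions of the equivalence while the paper does a three-way case split on which part of the condition fails --- plus your added (reasonable) liveness caveat about the check packet, which the paper handles implicitly by sending synthesized packets at a different priority.
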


\begin{proof}
1) If $r_p$ = \,$0$ holds in the whole episode represented by the same $Seq_{id}$: no resume frame is transmitted during the current detecting episode. All ports of the causality-loop is paused through the entire duration and a deadlock is determined. 
2) If $r_p$ = \,$1$ at some switch: port $p$ is resumed and the status is not changed in the current episode. At least one of the nodes on the causality-loop is resumed so that pause events are unable to form a deadlock.
3) If $Seq_{id}$ is changed at some switch, there must be a new pause frame following a resume frame. However, this must start a new independent episode of detection, which means $r_p$ = \,$1$ followed by $r_p$ = \,$0$ cannot happen in the same episode. Therefore, even if $r_p$ = \,$0$ holds for all switches on the loop, they are involved in different episodes so that the deadlock is not determined in the current episode. It is the job of the new detection episode to determine whether a deadlock is formed.
\end{proof}

\subsection{Initial Trigger on Loop}
\label{Initial Trigger on Loop}
When the initial trigger switch is part of the deadlock CBD loop, the deadlock can be detected from the loop itself. Figure~\ref{fig:trigger on loop} shows an example of this case in the Clos network topology. The principle for deadlock detection is identical to the general primitives previously described in Section~\ref{primitives}. 

\begin{figure}[t!]
    \centering
    \includegraphics[height=1.6in]{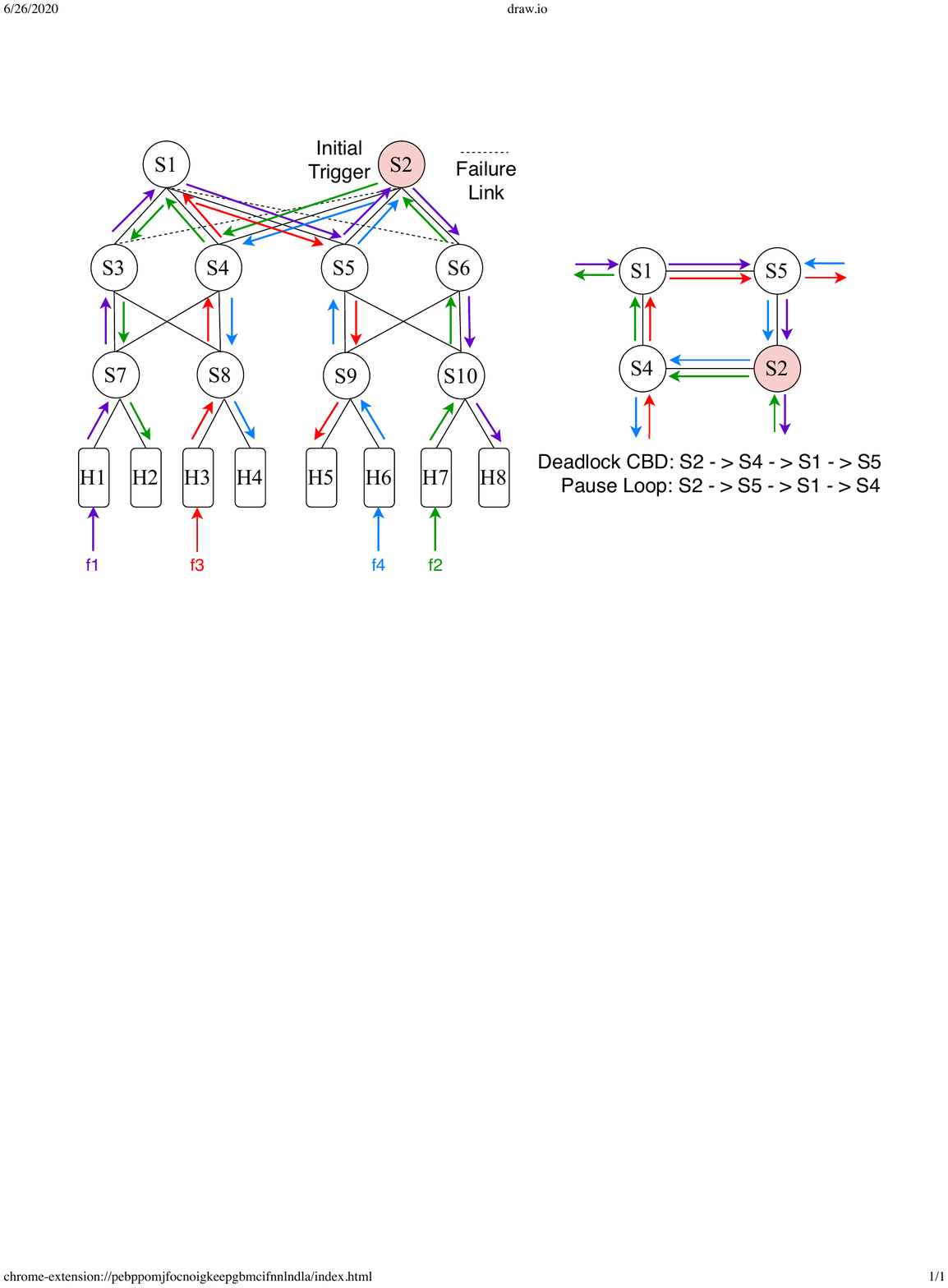}
    \vspace{-0.1in}
    \caption{Example of deadlock in the Clos network topology when initial trigger is on the loop}
    \label{fig:trigger on loop}
    \vspace{-0.1in}
\end{figure}

\textbf{Generic initiator selection:}
In this case, the initial trigger node is regarded as the generic initiator since it is at the beginning of the causality-chain and the causality-chain itself forms the CBD cycle. 

\textbf{Checking message propagation:}
Whether it is the initial trigger or non-initial trigger switch sending out a PFC pause frame, the current checking message \{$S_{tri}$, $P_{tri}$, $Seq_{id}$\} is piggybacked onto the PFC pause frames. When a non-initial trigger switch detects that the next hop of the causality-chain has already been paused, no new pause frame can be generated. The checking message \{$S_{tri}$, $P_{tri}$, $Seq_{id}$\} is then piggybacked onto a synthesized packet with a different priority. 
Based on the causality-loop primitive, if a deadlock exists, the causality-loop must be detected by $S_{tri}$.

\textbf{Temporal consistency guarantee:}
The temporal consistency primitive can take effect in this case by choosing the initial trigger as the start of the temporal consistency check. Deadlock is determined when the temporal consistency primitive holds.

\subsection{Initial Trigger out of Loop}
\label{Initial Trigger out of Loop}
Some practical deadlock scenarios are affected by pause events sent from switches out of the loop. 
As shown in Figure~\ref{fig:trigger out of loop}, malicious flow $f5$ with constant high sending rate causes a pause frame initially to be sent from $S10$, leading to the final deadlock loop between $S2$, $S5$, $S1$ and $S4$. Even if the deadlock can be broken from one of the switches on the loop, without solving continuous pause events from the initial trigger switch $S10$, $S6$ and $S2$ can be paused again and finally lead to a repeated formation of the same deadlock.

We observe that a signal of this case is that a middle switch receives multiple pause frames with the same \{$S_{tri}$, $P_{tri}$\} from different ports. The $Seq_{id}$ does not need to be the same as the initial trigger may alternately send pause and resume frames, which update the $Seq_{id}$ received from outside the loop. However, the detection of this case only cares about the same $Seq_{id}$ on the causality-loop. Therefore, even if the two pause frames with the same \{$S_{tri}$, $P_{tri}$\} received from different ports have different $Seq_{id}$, the detection process can still be triggered.
In the example of Figure~\ref{fig:trigger out of loop}, this phenomenon is detected at $S2$ when it receives two pause frames with the same \{$S_{tri}$, $P_{tri}$\} from $S6$ and $S4$.

\vspace{-0.05in}
\begin{lemma}
If deadlock exists and $S_{tri}$ is out of the causality loop, one switch (called the middle switch) must receive at least two pause frames with the same \{$S_{tri}$, $P_{tri}$\} from different ports.
\label{lemma:out of loop}
\end{lemma}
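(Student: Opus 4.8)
The plan is to follow the same logic as the proof of Lemma~\ref{lemma:on loop}, but to anchor the causality-chain argument at the node where the pause chain emanating from $S_{tri}$ first reaches the deadlock loop, rather than at the initial trigger itself. First I would set up notation: since a deadlock exists, it implies a causality-loop $C$; list its nodes in the order in which pause frames (and the checking metadata they piggyback) traverse them as $p_1 \to p_2 \to \cdots \to p_m \to p_1$. In a genuine deadlock every $p_i$ remains paused, and by the forwarding rules of Section~\ref{primitives} a loop node $p_{i+1}$ generates its pause frame carrying the metadata stored at its congested egress port, which was itself written by the pause frame that arrived from $p_i$; hence the trigger identifier circulating around $C$ is a single fixed value, call it $M$.

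Next I would bring in the hypotheses on $S_{tri}$. Because $S_{tri}$ is an initial trigger, it emits a pause frame carrying \{$S_{tri}$, $P_{tri}$\}; because it is the trigger \emph{of this deadlock} and lies off $C$, the pause events that sustain $C$ are all causally descended from that frame, so the chain rooted at $S_{tri}$ must reach $C$, and tracing the metadata that seeded $C$ back along the causality-chain shows it originates in $S_{tri}$'s initial pause, i.e.\ $M = $ \{$S_{tri}$, $P_{tri}$\}. Let $p_j$ be a loop node at which this chain enters $C$, arriving on an edge from some node $w \notin C$. Then $p_j$ receives a pause frame carrying \{$S_{tri}$, $P_{tri}$\} on the port facing $w$, and, since the loop itself circulates $M$, it also receives a pause frame carrying \{$S_{tri}$, $P_{tri}$\} on the distinct port facing its loop-predecessor $p_{j-1} \in C$. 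As $w \notin C$ while $p_{j-1} \in C$, these are two different ports of $p_j$, and the persistence of the deadlock guarantees both sources keep delivering such frames; thus $p_j$ is the desired middle switch. If the chain from $S_{tri}$ touches $C$ at more than one place, the claim only becomes easier.

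I expect the crux to be the identification $M = $ \{$S_{tri}$, $P_{tri}$\}. This requires unpacking the port-based causality structure: one must argue that at the instant $C$ closes, the first loop node to become congested stores metadata whose causality-chain runs all the way back to $S_{tri}$'s initial pause, and that no subsequent loop node overwrites it with metadata drawn from an unrelated congested egress port. The awkward subcase is a loop node simultaneously congested toward the external chain and toward a loop neighbor, where one must decide which identifier it propagates; the clean resolution is that, under the lemma's hypothesis that the off-loop $S_{tri}$ is the trigger responsible for the entire deadlock, the only identifier consistent with every loop pause being caused by $S_{tri}$ is \{$S_{tri}$, $P_{tri}$\} itself, which is exactly what forces the two frames observed at $p_j$ to carry the same tag.
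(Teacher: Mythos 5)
Your proposal is correct and takes essentially the same route as the paper: both identify the junction ("middle") switch where the pause chain from the off-loop $S_{tri}$ meets the causality-loop, and observe that it receives pause frames on two distinct ports---one from the external downstream chain, one from its loop neighbor---both of which trace back to and therefore carry the same \{$S_{tri}$, $P_{tri}$\}. Your version merely spells out more carefully the step the paper asserts in one line, namely that the identifier circulating on the loop is indeed \{$S_{tri}$, $P_{tri}$\}.
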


\begin{proof}
Once the causality-loop is formed and $S_{tri}$ is out of the loop, there must be a switch at the junction between the inside and outside of the loop. This switch has received pause frames from at least two directions. One is from the outside downstream switch, the other one is from the switch on the causality-loop. As both pause frames can be traced back to the $S_{tri}$, the received \{$S_{tri}$, $P_{tri}$\} from different ports must be the same. Notice that since pause and resume frames are possibly alternating continuously, the same \{$S_{tri}$, $P_{tri}$\} received from the same port are not considered.
\end{proof}

\begin{figure}[t!]
    \centering
    \includegraphics[height=1.6in]{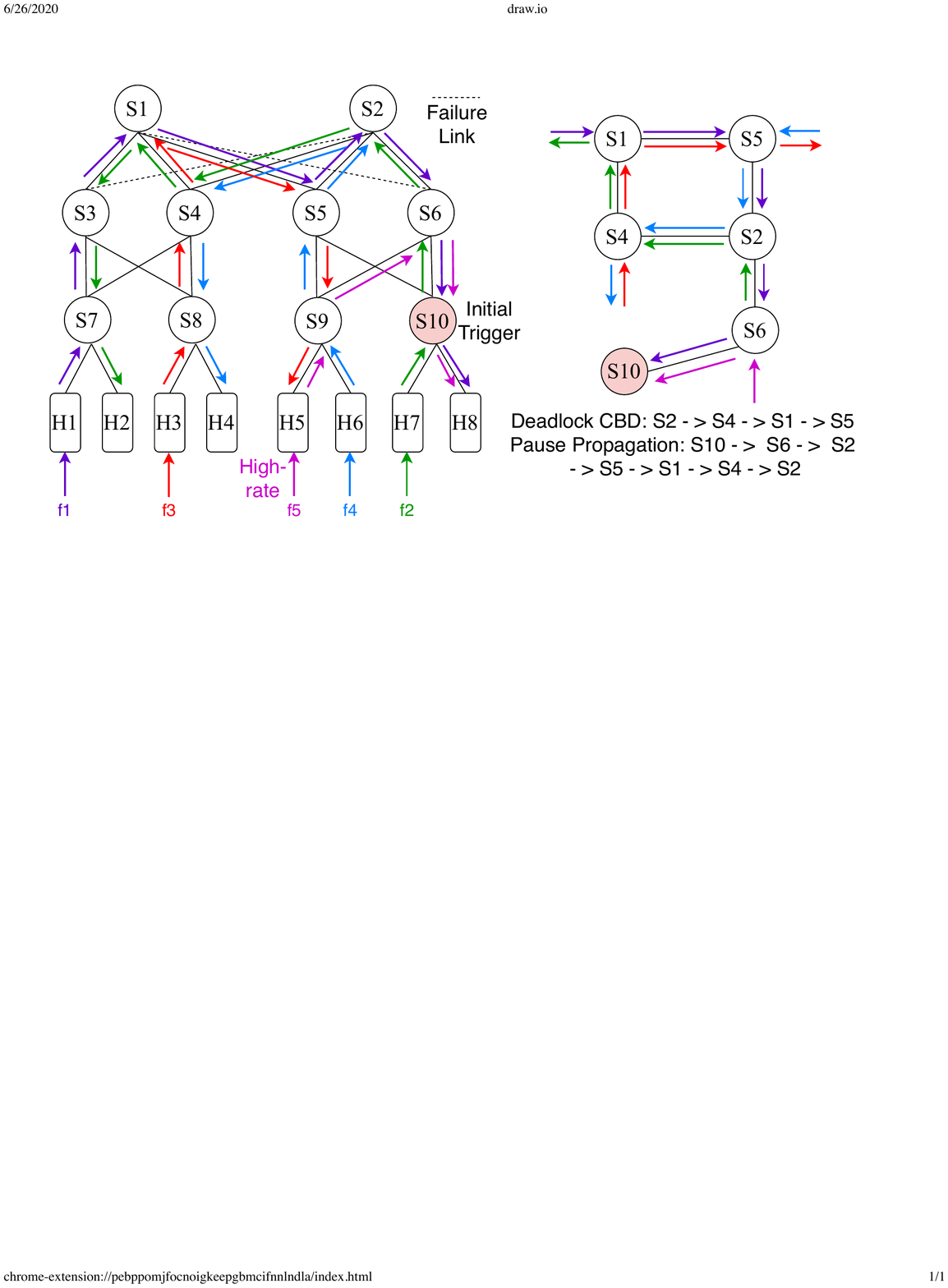}
    \vspace{-0.1in}
    \caption{Example of deadlock when initial trigger is out of loop}
    \label{fig:trigger out of loop}
    \vspace{-0.1in}
\end{figure}

\vspace{-1mm}
This condition could also be met even if there is no deadlock, as shown in Figure~\ref{fig:no deadlock}. $S7$ receives two pause frames with the same ID of $S1$ respectively from $S5$ and $S6$, however, no deadlock is formed. \sys can 
deal with the non-deadlock cases normally as well as the deadlock cases. The correctness of deadlock detection is guaranteed by the general primitives in Section~\ref{primitives}.

\textbf{Generic initiator selection:}
The middle switch receiving the same \{$S_{tri}$, $P_{tri}$\} from different ports is selected as the generic initiator to start a new process of deadlock detection. This process works in parallel with the previous process using the initial trigger as the generic initiator.

\textbf{Checking message propagation:}
As the middle switch is selected as the generic initiator, the \{$S_{gen-ini}$, $P_{gen-ini}$\} used for checking message in this case is the SwitchID and PortID of the middle switch, which we called $S_{middle}$ and $P_{middle}$. When the causality-loop detection is triggered, the middle switch has received two $Seq_{id}$ from different ports. The one received from the later port that has not stored the $Seq_{id}$ is selected for the subsequent deadlock detection. The middle switch generates a synthesized packet to carry the checking message \{$S_{middle}$, $P_{middle}$, $Seq_{id}$\}, which is forwarded along the causality-chain.

\textbf{Temporal consistency guarantee:}
Temporal consistency primitive is invoked by choosing the middle switch as the start of the temporal consistency check. 

Compared with the scenario that the initial trigger is on the loop, this case needs at most one additional round to detect the deadlock, which adds a small overhead.

\begin{figure}[t]
    \centering
    \includegraphics[width=0.3\textwidth]{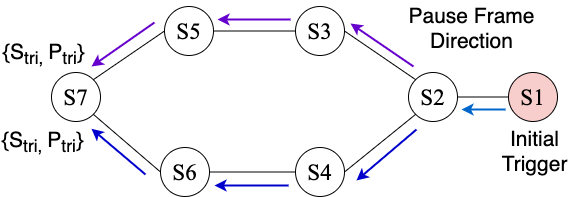}
    \vspace{-0.1in}
    \caption{A switch receives two pause frames with the same $S_{tri}$ from different ports, but no deadlock exists}
    \label{fig:no deadlock}
    \vspace{-0.1in}
\end{figure}

\section{Handling deadlock based on the initial trigger}
\label{section:discussion}

Upon detection of a deadlock, actions must be taken to break the deadlock and prevent future recurrence of the deadlock. Below we discuss several research directions. Details are left to future work.

\textbf{Breaking the deadlock:}
Packet dropping or temporary rerouting are common methods used to break a deadlock~\cite{martinez1997software, lopez1998very, shpiner2016unlocking}. 
However, rerouting is not always viable if not all flows have multiple paths. In addition, rerouting may create other deadlocks.
Breaking a deadlock by dropping packets is more practical and several fast synchronous draining approaches have been proposed~\cite{ramrakhyani2018synchronized}. The packet loss rate that can be tolerated by protocols like RoCEv2 is about 0.0001 ~\cite{zhu2015congestion}.
To break a deadlock, only a part of the buffered traffic needs to be dropped, thus the actual performance impact may be tolerable.

\textbf{Initial trigger handling:}
In addition to breaking the deadlock, further operations are needed to handle the initial trigger in order to prevent the deadlock from forming again.
If the initial trigger is a switch, a crucial step is to identify the heavy hitters which send a large amount of traffic and thus cause the congestion. Recent proposals for heavy hitter detection can be leveraged~\cite{sivaraman2017heavy, dixit2013impact}.
Once identified, further steps can be taken to limit the heavy hitters.

The initial trigger may also be a server when there is a flow control issue or a malfunctioning NIC. Due to limited memory resources in the NIC, a flow control issue may cause thousands of PFC pause frames to be sent per second from a server~\cite{guo2016rdma}.
In addition, bugs in the receiving pipeline of the NIC can cause the server to be unable to handle the received packets and continually send pause frames~\cite{zhu2015congestion, guo2016rdma}.
One approach for handling such an initial trigger server is to prevent the NIC from generating pause frames and disable lossless mode at the switch port. Alternatively, the connected switch can make use of dynamic buffer sharing, which improves the utilization of available buffer space and reduces PFC pause frame propagation.
\vspace{-3mm}
\section{Preliminary Results}
We prototype \sys with BMv2 software switches~\cite{bmv2} in the MiniNet environment for initial validation. The experiments are performed in the CloudLab platform~\cite{cloudlab}, each node has 4 cores and 32GB of RAM. We evaluate \sys for scenarios that the initial trigger is on the loop and out of the loop, using a $k=4$ fat-tree topology.

\textbf{Memory overhead:}
The memory overhead of \sys mainly comes from the port-based causality data structure and the maintained information of checking message. Both of them are determined by the number of ports in the network. In our experiments, each switch has $4$ ports and the memory overhead in this case is only $10^{-1}$KB. If we deploy \sys in a large-scale network with $10,000$ $64$-port switches, $14$-bit SwitchID and $6$-bit PortID are required to support the uniqueness requirement. The corresponding memory overhead is $1$KB per switch. Overall, the total memory requirement of \sys is quite small and it can be easily deployed in today's programmable data planes that usually have tens of MB of memory~\cite{tofino}.

\textbf{Detection effectiveness:}
We evaluate the effectiveness of \sys by creating deadlocks in two different scenarios---the initial trigger is on the loop and out of the loop respectively. 
We introduced two failed links so rerouting could finally lead to a deadlock loop. Traffic flows are generated across ToR switches and the resulting congestion could spread to the upstream switches and then to switches on the deadlock loop.
We repeat each scenario $10$ times. 
The average detection time for the \textit{on loop} scenario is $0.8$ms and $1.4$ms for the \textit{out of loop} scenario. In cases without causality-loop, no deadlock is declared by \sys. In cases that a causality-loop is formed but then one of the ports is immediately resumed, this situation is detected by the temporal consistency check without declaring any deadlock. There was no false positive or false negative.

\textbf{Benefits of resolving the initial trigger:}
To validate the benefits of resolving initial triggers, we simulate a misbehaving server that continually pauses the connected edge switch that leads to a deadlock. 
A baseline solution is to break the deadlock without resolving the initial trigger. We measure the normalized throughput of different flows and results are shown in Figure~\ref{fig:resolve}. Although the baseline can recover from the deadlock for a while, if the traffic pattern does not change, the deadlock will reappear. In contrast, \sys breaks the deadlock and resolves the initial trigger simultaneously, which prevents the recurrence of the deadlock.

\begin{figure}[t!]
    \centering
    \begin{subfigure}[b]{0.23\textwidth}
        \centering
        \includegraphics[width=\textwidth]{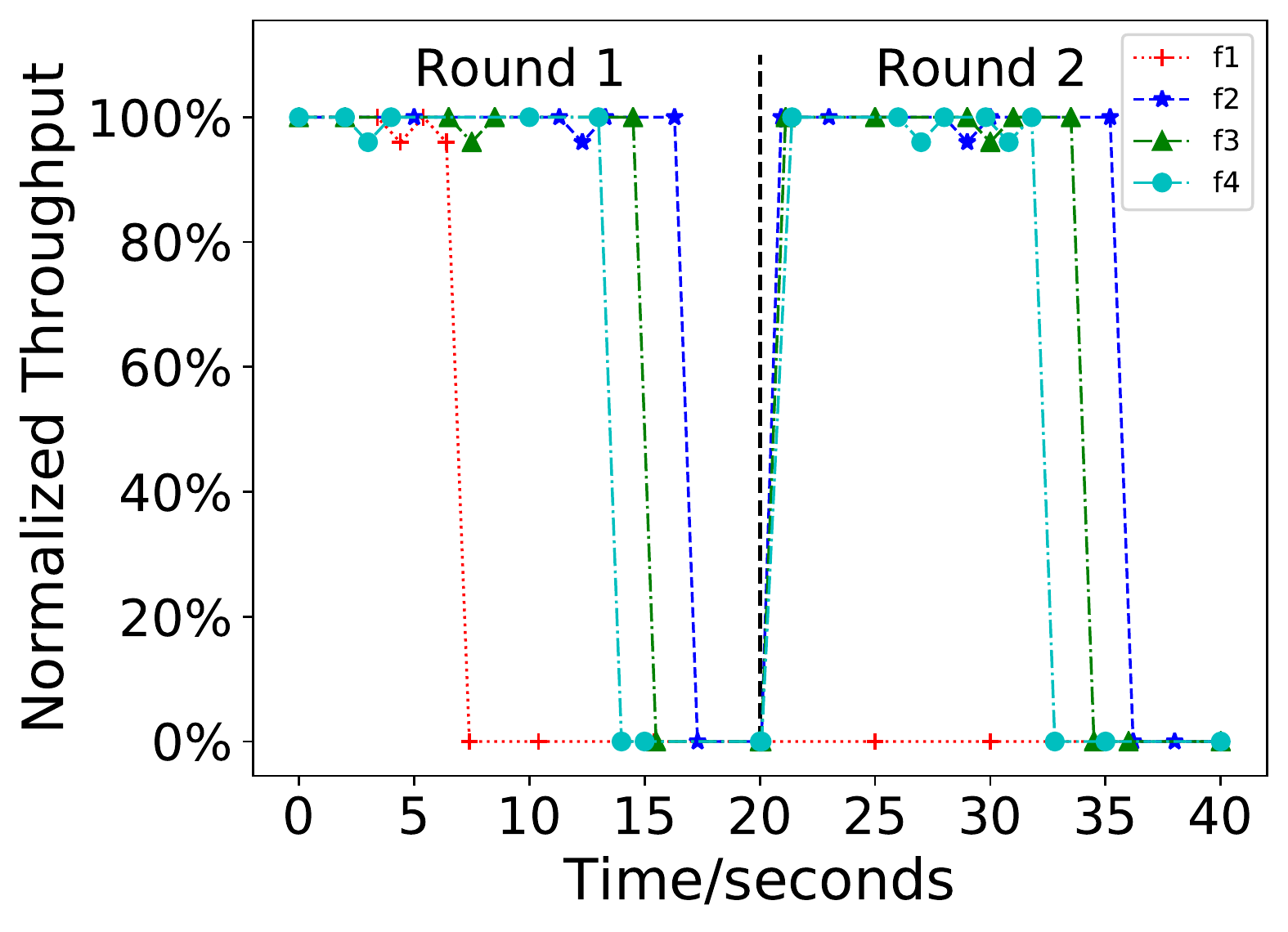}
        \caption{Without resolving initial trigger}
        \label{fig:resolve1}
     \end{subfigure}
     \hfill
    \begin{subfigure}[b]{0.23\textwidth}
        \centering
        \includegraphics[width=\textwidth]{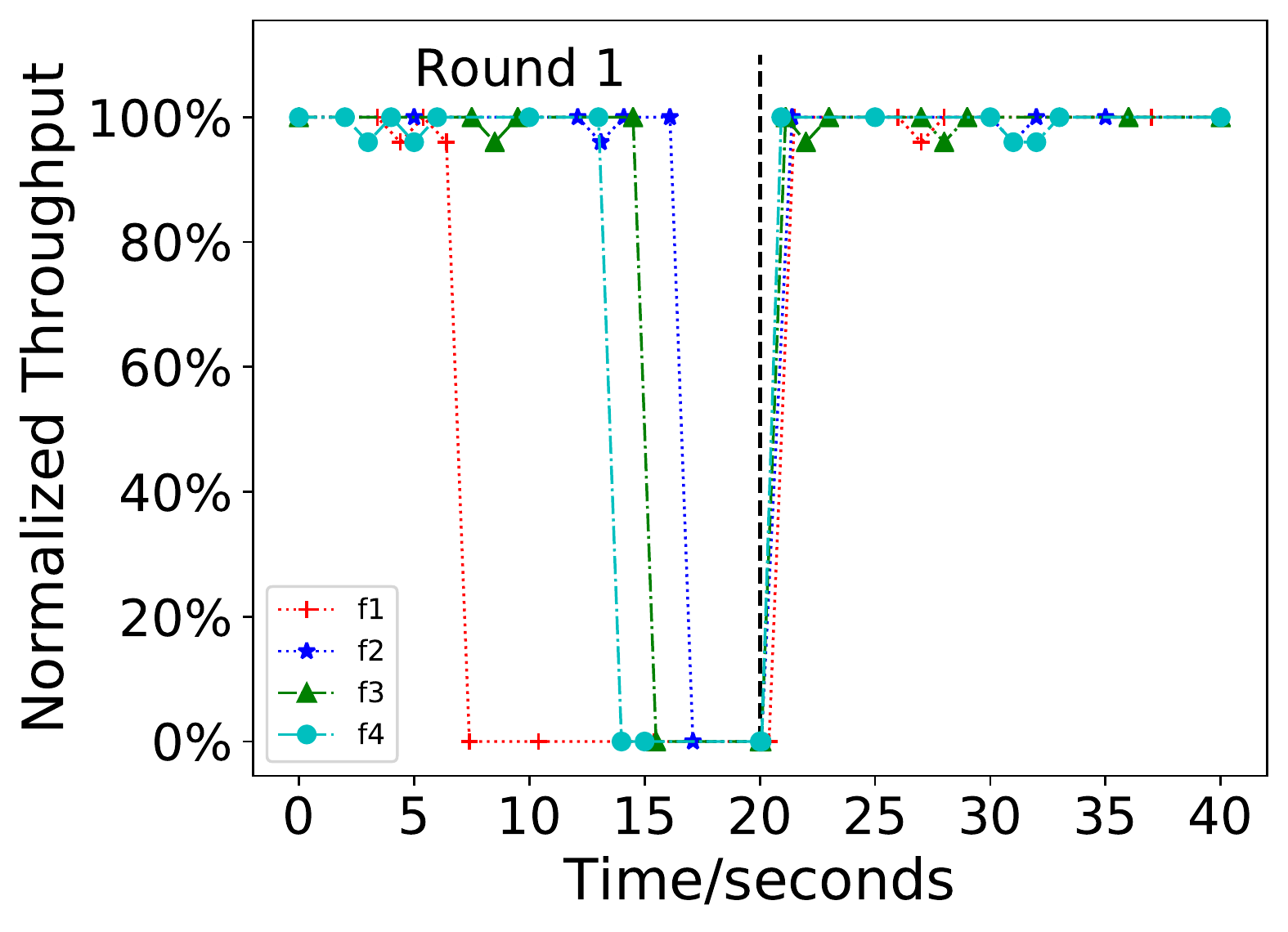}
        \caption{Resolving initial trigger}
        \label{fig:resolve2}
    \end{subfigure}
    \vspace{-2mm}
    \caption{Benefits of resolving initial trigger out of the loop 
    }
    \label{fig:resolve}
    \vspace{-0.2in}
\end{figure}
\vspace{-3mm}
\section{conclusion}
In this paper, we propose \sys to detect and resolve deadlocks in PFC networks. We identify the initial trigger to mitigate the recurrence of the same deadlock. The deadlock scenarios are analyzed and detected based on the location of the initial trigger. \sys can be implemented entirely in the data plane, which achieves low overhead and reacts quickly to deadlocks.

\section{Acknowledgement}
We would like to thank the BOLD Lab members for their useful feedback. This research is sponsored by the NSF under CNS-1718980, CNS-1801884, and CNS-1815525.

\bibliographystyle{abbrv} 
\begin{small}
\balance
\justifying  
\bibliography{main}
\end{small}

\end{document}